\documentclass[twocolumn]{article}

\usepackage[english]{babel}
\usepackage[utf8]{inputenc}
\usepackage[T1]{fontenc}

\usepackage[boldsans]{ccfonts}
\usepackage{eulervm}

\usepackage[%
bibencoding=utf8,
backend=biber,
bibstyle=ieee,
citestyle=numeric-comp,
sorting=nty,
natbib=true
]{biblatex}

\usepackage{amsmath}
\usepackage{amssymb}
\usepackage{amsthm}

\usepackage[%
 breaklinks,
 bookmarks,
 colorlinks=false,
 pdfborder={0 0 0},
 pdflang={en},
 pdfpagelabels=true,
 plainpages=false,
 unicode=true
] {hyperref}
\usepackage{hyperxmp}
\usepackage{ellipsis}

\usepackage[%
 capitalise,
 nameinlink,
 noabbrev,
 ] {cleveref}

\usepackage{csquotes}
\usepackage{algorithm2e}

\usepackage{tikz}
\usetikzlibrary{plotmarks, positioning}
\usetikzlibrary{calc,shapes,snakes,arrows.meta,chains}
\usepackage{pgfplots}
\defcitealias{hoefler-corrected-gossip}{Corrected Gossip}

\newcommand\reduce{{{reduce}}}
\newcommand\Reduce{{{Reduce}}}
\newcommand\allreduce{{{allreduce}}}
\newcommand\Allreduce{{{Allreduce}}}

\newcommand\inoperationfailures{{in-operation}{}}
\newcommand\Inoperationfailures{{in-operation}{}}
\newcommand\inoperational{{in-operational}}
\newcommand\inoperationally{{in-operationally}}

\newcommand\preoperationfailures{{pre-operation}{}}
\newcommand\preoperational{{pre-operational}}
\newcommand\preoperationally{{pre-operationally}}
\newcommand*{\defeq}{\mathrel{\vcenter{\baselineskip0.5ex \lineskiplimit0pt
                     \hbox{\scriptsize.}\hbox{\scriptsize.}}}%
                 =}
\newcommand*{\N}{\mathbb{N}}
\newcommand*\comma{,}
\DeclareMathOperator{\xmod}{mod}

\newtheorem{theorem}{Theorem}

\newtheorem*{definition}{Definition}

\begin{document}

\title{Fault-tolerant Reduce and Allreduce operations based on correction}

\author{Martin K\"uttler \\ martin.kuettler@tu-dresden.de \\ TU Dresden \and Hermann H\"artig \\ hermann.haertig@tu-dresden.de \\ TU Dresden}

\maketitle

\section{Introduction}

When multiple {processes}, i.e. independently executing entities,
potentially executing on distinct hardware, collaborate to achieve a
common goal they require communication. One prominent example of this
scenario is \emph{{HPC}}.

Beyond one to one communication (also
called \emph{Point-to-Point-communication}), that only involves two
communication partners, important operations to facilitate this
communication are so called {\emph{collective
(communication)}}\emph{ operations}, where a potentially bigger set of
processes exchanges data following some communication
pattern~\cite{chunduri18characterzation_of_mpi_usage}.

Commonly used examples of such communication patterns are broadcast,
where one process sends the same data to all other
participants, reduce, where all processes contribute some data and
one process receives the combined result, and allreduce, which works
like reduce but all processes receive the result. These operations
are examples for the more general communication patterns one to many,
many to one, and many to many.

It is common to replace `many' with `all.'  Usually a different
technique is used for restricting the set of processes participating,
e.g., {MPI} groups\footnote{A MPI communicator, that is directly used to
achieve this, uses a MPI group to store membership.}~\cite{mpi}. This
point of view is adopted in this document: It is always assumed that
all processes take part in the collective communication operations.

Point-to-point messages are a primitive operation of computer
networks. {Collective} operations can be implemented using
that primitive, i.e., by sending multiple one to one messages between
selected processes. The goal of this document are algorithms that
implement a {collective} operation using this primitive.

For example, broadcast can be implemented by sending messages along
the edges of a tree that is rooted at the original sender of
the broadcast (the \emph{root} process). Processes correspond to
vertices in the tree, and each process sends the data to all its
children, after receiving the data (if the vertex has a parent in the
tree).

This example shows how process {failures}, where a process fails
to send some or all messages pertaining to the {collective}
algorithm, impede the {collective} operation. If in the tree
one {process} does not send messages to the {processes}
corresponding to its child vertices, all subtrees rooted at its
children do not receive any data.

The focus of this work are fault tolerant algorithms
for reduce and allreduce operations with small messages. Failures of
communicating processes are considered exclusively, in contrast to
missing/delayed/otherwise impeded messages.

Algorithms for reduce, and allreduce are presented in this paper. An
algorithm for broadcast was previously
described~\cite{kuttler-corrected-trees}.

The semantics of fault tolerance aimed for in this paper is: If not
more than a number $f$ of {processes} fail preoperationally
(i.e. before the communication operation), the {collective}
operation provides the same result as if the failed processes were
excluded by all participating processes in advance. Processes
failing inoperationally (during the operation) can appear either
alive or dead with respect to the operation, but no mixed state is
allowed.

The algorithms target small messages. For big messages, they still
behave correctly, but other implementations are more efficient. Which
sizes constitute a small or big message cannot be stated in general,
and is not discussed here. A rough distinction can be made based on
whether the operation is latency critical (the focus of this work), or
bandwidth critical.

\section{Related work}

Correction was first used by
\citeauthor{hoefler07practically}~\cite{hoefler07practically} in
combination with an unreliable hardware multicast. It was intended to
be executed after a hardware multicast, to correct for potential
omissions, in order to turn the operation into a broadcast. However,
only probabilistic guarantees are given.

The basic purpose of up-correction is the same: to correct omissions
made by the fault-agnostic algorithm of reduce. It the context of
this work, omissions are not random, but deterministically caused by
process failures.

The idea of correction was significantly expanded upon by
\citeauthor{hoefler-corrected-gossip}~\cite{hoefler-corrected-gossip}
in the work on \citetalias{hoefler-corrected-gossip}. Three different
correction algorithms with different semantics are proposed. They
are all intended to be combined with a gossip phase that disseminates
a broadcast value probabilisticly. Following this step,
correction is used to improve the probabilities provided by gossip,
or for some correction algorithms even give some guarantees, under
assumptions on the number and timing of process failures.

The up-correction algorithm described in \cref{upcorrection} are based
the same ideas as the correction algorithms used
in \citetalias{hoefler-corrected-gossip}. They serve a different
purpose, however. In \citetalias{hoefler-corrected-gossip}
the correction algorithms are used to help against the inherent
shortcoming of gossip, i.e., that messages are sent randomly and thus
some processes might never receive a message. Their use has no
relation to fault tolerance there. Instead, disseminating information
by sending messages to random participants is inherently tolerant to
crash {process} {failures}. In this document, the
up-correction algorithm are used to correct omissions induced by
process failures.

In addition to the theoretical difference of the purpose of
correction, there is a practical difference
between \citetalias{hoefler-corrected-gossip}'s correction and
up-correction as used in this
work. In \citetalias{hoefler-corrected-gossip}, the gossip phase, and
the following correction phase are assumed to be two globally separate
phases. The published algorithm is only simulated, not practically
implemented. Conversely, in this work the active phase is a local, not
a global, property: {processes} execute the phases in succession,
but independently of the progress of other {processes}.

\section{Failures}

In this document failures of processes are assumed to follow
a \emph{{fail-stop}}-model, i.e., processes that fail stop sending
any messages. Sending to a failed process does not provide any
indication of the failure. Instead, the send operation completes like
a send operation to a live process, and the failed process will not
provide any reaction to the message. This is also sometimes called
a \emph{crash} failure model. Distinctions between the two terms are
made inconsistently, and none are assumed here.

The network is assumed to be reliable, i.e., messages are not lost,
reordered, or modified. In other words, no communication failures are
possible.

\section{Reduce}

Without loss of generality it is assumed that the recepient of the
reduce (i.e., the root) is process $0$. If this is not the case, its
number can be swaped with that of process $0$ to restore this
property.

The basic reduction function is assumed to be assocative (as, e.g.,
mandated by MPI~\cite{mpi}) and commutative for the description
below.

The following terminology is used. A reduce operation is identified by
a reduce message $m$. The contents of this message are defined
below. Processes do not need to receive a message before sending
one. Depending on the implementation, some {processes} need to receive
a message first, but logically most processes just contribute a value,
and do not receive any information. Thus, reduce messages for the same
operation are created by multiple {processes}.

Before sending any network messages related to $m$, a process
indicates its intention by calling \emph{init\_reduce($m$)}. After the
reduce-operation is completed locally, i.e., for a non-root process after
sending all information to the parent process, and for the root after
the final result is available the process reports to the caller by
calling \emph{deliver\_reduce($m$)}.

A reduce message has the following data members:

\begin{itemize}

\item (A descriptor of) the set of participating processes.

\item A unique id.

\end{itemize}

\subsection{Semantics}
\label{reduce-semantics}

Let $n$ {processes} execute reduce. Up to $f$ arbitrary processes
may fail either \preoperationally\ or \inoperationally.

\begin{enumerate}

\item If the root calls {\emph{deliver\_reduce($m$)}}, all
  non-failed processes have called {\emph{init\_reduce($m$)}}.

\item Each process calls {\emph{deliver\_reduce($m$)}} at most once
  for a given $m$.

\item The value returned by reduce at a non-failed root includes the
  input values of all non-failed processes.

\item Values of {failed} {processes} are either
  included normally, or disregarded. No intermediate stages are
  possible.

\item Each non-failed process executes \emph{deliver\_reduce($m$)} eventually,
  if all non-failed processes execute \emph{init\_reduce($m$)}.

\end{enumerate}

Reduce sent to a failed process simply becomes a no-op.

\subsection{Up-Correction}
\label{upcorrection}

The underlying idea of the algorithm for reduce presented in this
document is to combine a tree phase with a preceeding up-correction
phase. The tree phase implements the reduce operation completely in
the failure-free case.

In case a {process} fails, the dissimination of data is inhibited. The
tree phase consists of every process receiving messages all children,
followed by every process except the root sending a single messages to
its parent in the tree. This dissimination of data is inhibited in
case there is a faulty procss that does not send a message to its
parent.

\begin{algorithm}
  \DontPrintSemicolon
  \KwIn{The data contributed by this process}
  \KwOut{The data used in the tree phase by all processes in this
    {up-correction group}}
  \SetKwData{Data}{data}
  \SetKwData{ProcessId}{process\_id}
  \SetKwData{Group}{group}
  \SetKwFunction{ReplicateFct}{up\_correction}
  \SetKwProg{Fn}{function}{ begin}{end}
  \Fn{\ReplicateFct{\Data}}{
    \Group $\leftarrow$ compute\_up\_correction\_group(f, \ProcessId)\;
    senddata $\leftarrow$ \Data\;
    \For{$p \in $ \Group $\setminus\{\ProcessId\}$}{
      \emph{Note: no failure information is sent here}\;
      send (senddata) to $p$\;
      receive (data) from $p$\;
      \Data $\leftarrow$ reduce\_function(\Data, received data)\;
    }
    \KwRet \Data\;
  }
  \label{alg:upcorrection}
  \caption{Up-correction}
\end{algorithm}

To compensate for this problem, {processes} exchange their data in
\emph{{up-correction groups}} in the up-correction phase prior to
the tree phase.

To tolerate up to $f$ failures, all processes $p$ that share the
\emph{group number} $\left\lfloor \frac{p - 1}{f + 1}\right\rfloor$
form one \emph{{up-correction group}} and exchange messages. In
addition, if the last group (the one with the highest number) has less
than $f + 1$ members, the root is also part of it. Otherwise, the root
does not belong to any group. Values are exchanged with (sent to, and
received from) each other member in the same group.

Values are exchanged in the group, and reduced, i.e., combined using
the reduction function, locally. The result is that all members of an
{replication group} have the same value after their
up-correction phase (\Inoperationfailures\ make this situation more
complicated. If members of a group experience \inoperationfailures,
the final value of other members of the group might or might not
include the value of the failed process. Both results are considered
correct in the semantics of reduce. This value is used in the
tree phase. Details are in the description of reduce, see
\cref{reduce-description}.

The motivation of the design of the groups is that {processes} with the
same number in the subtrees of the root exchange
values. This limits the number of messages that are sent and received
by each process, and enables the root to efficiently reason about
which value is included in which subtree.

Note that in an {up-correction group} that includes at least one failed
{process}, all live processes will time out on the respective receive
operations and will confirm the sender to have failed with the
respective failure monitor. The resulting delay is unfortunate, but
not avoidable: In reduce, where every process contributes a value,
each process that fails to send a value must be confirmed to have
failed. How this is done is independent of the communication
algorithm. Timeouts are used here.

\subsection{Algorithm}
\label{reduce-description}

\begin{algorithm}
  \DontPrintSemicolon
  \KwIn{The data contributed by this process}
  \KwOut{The resulting data}
  \SetKwData{Data}{data}
  \SetKwData{Root}{root}
  \SetKwFunction{ReduceFct}{reduce\_root}
  \SetKwProg{Fn}{function}{begin}{end}
  \Fn{\ReduceFct{\Data}}{
    \If{root is in a up-correction\ group}{
      \Data $\leftarrow$ \ReplicateFct(\Data)\;
    }
    \BlankLine
     msg, $c$ $\leftarrow$ receive from any child $c$ (data, failure information)\;
     \If{failure information indicates failure in subtree}{
       continue\;
     }
     \eIf{root data not included in data from sender (determined from sender id)}{
       \KwRet basic\_reduce\_function(\Data, received data)\;
     }{
       \KwRet received data\;
     }
     raise Error(``No failure-free subtree'')\;
  }
  \label{alg:reduce-root}
  \caption{Fault tolerant \reduce, algorithm executed at the root}
\end{algorithm}

\begin{algorithm}
  \KwIn{The data contributed by this process}
  \DontPrintSemicolon
  \SetKwData{Data}{data}
  \SetKwData{Root}{root}
  \SetKwFunction{ReduceFct}{reduce\_non\_root}
  \SetKwProg{Fn}{function}{begin}{end}
  \Fn{\ReduceFct{\Data}}{
    \Data $\leftarrow$ \ReplicateFct(\Data)\;
    \BlankLine
      \For{$c \in$ children}{
        receive (data, failure information) from $c$\;
        \Data $\leftarrow$ basic\_reduce\_function(\Data, received data)\;
        update failure information\;
      }
      send (\Data, failure information) to parent\;
  }
  \label{alg:reduce-non-root}
  \caption{Fault tolerant \reduce, algorithm executed at non-roots}
\end{algorithm}

\begin{algorithm}
  \KwIn{The data contributed by this process, and the number of the root
    process}
  \KwOut{At the root: The resulting data}
  \DontPrintSemicolon
  \SetKwData{Data}{data}
  \SetKwData{Root}{root}
  \SetKwData{ProcessId}{process\_id}
  \SetKwFunction{ReduceFct}{reduce}
  \SetKwFunction{ReduceRootFct}{reduce\_root}
  \SetKwFunction{ReduceNonRootFct}{reduce\_non\_root}
  \SetKwProg{Fn}{function}{begin}{end}
  \Fn{\ReduceFct{\Data, \Root}}{
    \eIf{\ProcessId == \Root} {
      \Return{\ReduceRootFct{\Data}}\;
    }{
      \ReduceNonRootFct{\Data}\;
    }
  }
  \caption{Fault tolerant \reduce, that calls either version above}
  \label{alg:reduce}\label{alg_reduce}
\end{algorithm}

{Processes} that {fail} before calling a \reduce\ operation
do not contribute their input value. This can not be avoided. The
purpose of this fault tolerant operation is to make sure that
these {processes} do not hinder other processes from contributing
their input value.

For the sake of the description, it is assumed that the
root process does not fail. No workaround for
the case of a failed root is necessary. If the root fails
(\preoperational\ or \inoperational), this operation becomes a no-op,
adhering to the semantics.

{Processes} first execute the up-correction\ algorithm, and then
execute a tree\ phase, in which a normal \reduce\ up a tree is
performed.

The up-correction\ phase is described in \cref{upcorrection}. The
resulting reduced value is then used in the
tree\ phase to be delivered to the parent {process}.

In the failure-free case, the root {process} gets values from all its
children that in total (including the value of the root itself)
include each value $f+1$ times. By selecting one value it received,
the root gets the correct value. Which value the root should
select, based on the information on failures, is explained
in \cref{reduce-failure-information}. There are multiple ways for
failure information to be propagated.

\begin{figure}
    \begin{center}
\begin{tikzpicture}[
every node/.style = {draw,circle,minimum size=6mm,inner sep=0mm},
        level 1/.style={sibling distance=49mm},
level 2/.style={sibling distance=23mm},
level 3/.style={sibling distance=9mm},
level/.style={level distance=10mm}]
\node (n0) {0}
  child { node (n1) {1}
    child { node (n3) {2} }
    child { node (n5) {3} }
  }
  child { node (n2) {4}
    child { node (n4) {5} }
    child { node (n6) {6} }
  };

  \node[cross out, draw=red] at (n1.center) () {1};
  \draw (n2) edge [bend right=30,->]
             node [draw=none, midway, label=right:4--6] {}
        (n0) ;

  \path (n3) edge [bend left=30,->]
             node [draw=none, midway, label=left:2] {}
        (n1);
  \path (n5) edge [bend right=30,->]
             node [draw=none, midway, label=right:3] {}
        (n1);

  \path (n4) edge [bend left=30,->]
             node [draw=none, midway, label=left:5] {}
        (n2);
  \path (n6) edge [bend right=30,->]
             node [draw=none, midway, label=right:6] {}
        (n2);
\end{tikzpicture}
    \end{center}
\caption{The failed process 1 impedes the propagation of data in the
tree phase. Arrow labels show the values of which processes
are included in the respective message along the path.}
\label{fig:failure_reduce_tree}
\end{figure}
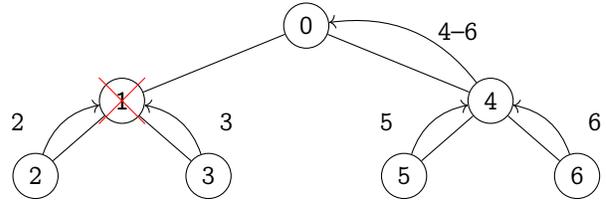

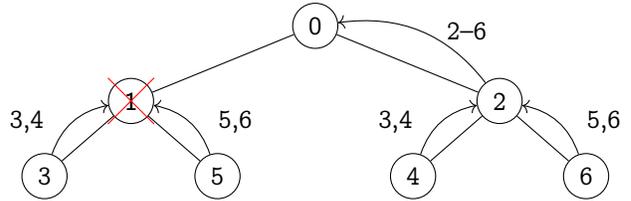
\begin{figure}
  \begin{center}
\begin{tikzpicture}[
every node/.style = {draw,circle,minimum size=6mm,inner sep=0mm},
        level 1/.style={sibling distance=49mm},
level 2/.style={sibling distance=23mm},
level 3/.style={sibling distance=9mm},
level/.style={level distance=10mm}]
\node (n0) {0}
  child { node (n1) {1}
    child { node (n3) {3} }
    child { node (n5) {5} }
  }
  child { node (n2) {2}
    child { node (n4) {4} }
    child { node (n6) {6} }
  };

  \node[cross out, draw=red] at (n1.center) () {1};
  \draw (n2) edge [bend right=30,->]
             node [draw=none, midway, label=right:2--6] {}
        (n0) ;

  \path (n3) edge [bend left=30,->]
             node [draw=none, midway, label=left:3\comma4] {}
        (n1);
  \path (n5) edge [bend right=30,->]
             node [draw=none, midway, label=right:5\comma6] {}
        (n1);

  \path (n4) edge [bend left=30,->]
             node [draw=none, midway, label=left:3\comma4] {}
        (n2);
  \path (n6) edge [bend right=30,->]
             node [draw=none, midway, label=right:5\comma6] {}
        (n2);

  \useasboundingbox (n3.west) -- (n0.north) -- (n6.east) -- (n6.south);

\end{tikzpicture}
  \end{center}
\caption{Up-correction phase and subsequent tree phase with the
  same failed processes as in \cref{fig:failure_reduce_tree}.}
\label{fig:failure_reduce_tree_up-correction}
\end{figure}

As an example of \reduce\ with a {failed} {processes}, consider seven
processes that want to compute the sum of their process numbers, or
rank numbers in MPI terms. The process number is a unique identifier
in the range $0$ to $6$. Process $1$ is assumed to have failed. The
goal is thus to compute the sum of the remaining process numbers
$0+2+3+4+5+6=20$. Process $0$ is the root of the
\reduce\ operation. \Cref{fig:failure_reduce_tree,fig:failure_reduce_tree_up-correction}
depict the communication sent in this scenario in a ``common'' tree
implementation, and in the tree\ phase of the algorithm presented
here. The algorithm constitutes an up-correction\ phase and a
tree\ phase. The labels of the arrows in
\cref{fig:failure_reduce_tree,fig:failure_reduce_tree_up-correction}
depict the process ids whose values are included in the (partial) sum
sent in the respective message (this information is not sent, and is
only included in the picture for illustrative purpose). Since each
process sends its process id as a value in this example, the sum of
the numbers listed is the value being sent in each message.

In the tree algorithm, which corresponds to the tree\ phase, the
processes at the leaf positions in the tree start by sending their
process number to their parents. I.e., processes $2$, $3$, $5$, and $6$
send messages. Next, process $4$ receives the messages sent to it,
combines the two corresponding values with its own value, and obtains
the resulting partial result $5 + 6 + 4 = 15$, which it sends to its
parent. Unfortunately, process $1$ does not act likewise, because it
has failed. Finally, process $0$ receives the partial result $15$, and
adds it to its local value of $0$. It does not receive any
contribution from the left subtree.

Things go differently when an up-correction\ phase is first
executed. This is depicted
in \cref{fig:failure_reduce_tree_up-correction}. Note that the numbering
in the tree is different now. While it was arbitrarily chosen to be
depth-first in \cref{fig:failure_reduce_tree}, the numbering is now
matching the numbering scheme for \reduce, as described
in \cref{upcorrection}. Process $0$, the only one with a special role,
did not change its position in the tree.

In this example there are $f = 1$ failed processes, and in
the up-correction\ algorithm processes exchange messages in groups of
$f + 1 = 2$. Processes $3$ and $4$ send messages to each other, as do
processes $5$ and $6$. Processes $1$ and $2$ would too, but since
process $1$ has failed, it does not send a message. Process $0$ does
not send any messages in this case, because $n-1=6$ is divisible by
$f+1=2$, and thus process $0$ is not a member of any {up-correction
group}. Upon receiving a message, each process updates its local value
with the one sent to it. I.e., processes $3$ and $4$ hold the value $3
+ 4 = 7$ afterwards. Processes $5$ and $6$ store $5 + 6 =
11$. Processes $2$ and $0$ do not receive any message, and retain
their original values.

Next, the tree\ phase works like the tree algorithm described
above, but all processes send the value obtained in the up-correction\
phase instead of their original value. Initially, processes $3$ and
$4$ send $7$, and processes $5$ and $6$ send $11$. Next, process $2$
receives the two messages sent to it, and computes the sum of their
values and its own local value $7 + 11 + 2 = 20$. This result is sent
to the root. Process $1$ does not send any messages. Process $0$
receives one message with the value $20 = 2 + 3 + 4 + 5 + 6$, and adds
its local value $0$ to it, to get the final result, $20$. Like before,
process $1$ did not get to contribute any value, because it failed
before the operation. Unlike before, the children of process $1$ did
get to contribute their values. Since process $2$ can include the
information that there where no failures in its subtree, process $0$
knows that the value it received is complete.

Generally, the up-correction\ phase (described in \cref{upcorrection}),
yields a single value, $v$, in each {process}. The root is an
exception if it is not part of the last group. If it is not, its value
$v$ is its input value to \reduce. For all other processes, $v$ is the
result of reducing their input value with the input values of all
other processes in the same up-correction\ group. The resulting value
$v$ is used in the tree\ phase. Each process, except for the
root, waits for messages from all their children, reduces their value
with the local $v$, and then sends the result to its parent. This is
shown in pseudocode in algorithm \ref{alg_reduce}.

If only the tree\ phase were executed, the root would get the
results from combining the results from all subtrees of its child
processes, if no failures occur. If up-correction\ is executed before,
the root gets from each child process either the complete result, or
the complete result without the value of the root, or the complete
result without the values of the last group. The values of the last
group are included if the last group has a member in the respective
subtree. If root is a member of this group, its value is naturally
included. In any case the root {process} has the information
necessary to complete the result it gets from its children. Thus, the
first answer that includes an indication that no failure happened in
the respective subtree (more on that
in \cref{reduce-failure-information}) suffices for the root to compute
the final result, and return from the call to \reduce.

\subsection{Failure information}
\label{reduce-failure-information}

The root receives multiple
results. In the {failure}-free case, a correct result is sent by all
children of the root, and the root can select any one. If {processes}
fail, there can be incorrect results, and the root needs some
information based on which it can select a correct value. Note that,
corresponding to the semantics in \cref{reduce-semantics}, different
correct results are possible. If processes fail \inoperational, no condition
is made on whether they get to include their value. It is possible
that both results -- one including the value from an eventually
failing process, and one without it -- are sent to the root along
different paths. This can, e.g., happen if a process completes the
up-correction phase, but does not execute the subsequent
tree phase.

If only \preoperationfailures\ occur, there is only one correct
result.

To enable the root to select a valid result, a failure description is
accumulated in each subtree, and sent along with the reduction
value. There are multiple options for this information.

The failure description that provides the most information about
failures to the root, at the cost of the highest potential message
size, is sending a list of known failed processes. Note
that \inoperational\ failures can lead to a process appearing non-failed
(contributing its value, and not being included in the list of failed
processes), when in fact it fails shortly after. This can not be
avoided, and it will potentially be detected in the next communication
operation spanning this process (or never, if there is no such
operation).

In this scheme, in the up-correction\ phase and in the tree
phase, each {process} appends the ids of all processes it can not
receive a value from to a list of failed processes. This list is
appended to the message that is sent to the parent in the tree\
phase and the parent adds the lists of its children to its own. In that
way, the root receives a complete list of failed processes from each
child that is known to have the complete reduction information, i.e.,
that does not contain a failed process in the subtree spanned by it.

One potential use of the list of failed processes is to make that
information available to all processes, to exclude failed processes in
future operations. This is not described here further.

A simplification of this scheme, that requires less data to be sent,
at the expense of less information being available, is to only send
the size of this list. Since the list is only appended to, without any
regards to its content (note that items in lists that are concatenated
always come from disjoint sets), the size can easily be tracked. To
enable the root to choose which subtree's value to select, each
process needs to send along a bit to indicate whether a process failed
in this subtree. This bit is set when a child is found to have
failed in the tree.

A third -- even simpler -- scheme is to only send a single failed
bit. This bit is set in the tree phase, if a process does not
receive a value from one of its children. It is not modified in the
up-correction phase. The bit is equal to the 'local' bit in the
second scheme.

That way, the root {process} only knows whether some {failure}
happened in a subtree.

\subsection{Properties}

\label{reduce-proofs}

The goals of this section are to proof the semantics described
in \cref{reduce-semantics}, and to detail the number of messages that
are required for \reduce.

\begin{definition}
  Let $T$ be a tree, $f\in\N_0$. Let $r$ be the root of $T$. Then $T$ is an
  \emph{{I(f)-tree}}, if

  \begin{enumerate}

  \item $r$ has $f+1$ children $c_0,\ldots,c_f$. The subtrees spanned
    by these vertices are called $T_0,\ldots,T_f$.

  \item For any two subtrees of the root $T_i, T_j$, $i,j\in \N_0$,
    $i,j \le f$, the sizes of $T_i$ and $T_j$ differ by at most one.

  \end{enumerate}

\end{definition}

\begin{theorem}
  \label{thm:reduce-replication} Let there be no more then
  $f$ {processes} that experience a {failure}, \inoperational\
  or \preoperational. Let a {I(f)-tree} be used,
  with {up-correction groups} of size $f+1$.

  After up-correction, all values of non-failed processes, except for the
  values of processes grouped with root, are included exactly once in
  the final value available in each subtree of the root that does not
  indicate a failure after the tree phase.
\end{theorem}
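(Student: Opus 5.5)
The plan is to make the numbering of the I(f)-tree explicit and then count, for each non-failed process outside the root's group, how many times its value reaches the result of a failure-free subtree. I will assume, as the example in \cref{fig:failure_reduce_tree_up-correction} suggests, that non-root processes are numbered breadth-first with the children of every vertex in consecutive order. Then process $p \ge 1$ lies in subtree $T_{(p-1) \xmod (f+1)}$. Consequently the up-correction group with number $g = \lfloor (p-1)/(f+1)\rfloor$, namely $\{g(f+1)+1,\ldots,g(f+1)+f+1\}$, has exactly one member in each subtree $T_0,\ldots,T_f$, except possibly the last group. I first state this as a lemma about the labelling. It uses the I(f)-tree property that subtree sizes differ by at most one, which is exactly what makes the round-robin assignment consistent with the tree shape. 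If the last group is incomplete, it has at most one member in each of some of the subtrees, and the root is added to it; these are the ``processes grouped with root'' that the statement excludes.

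Next, fix a subtree $T_i$ whose value arrives at the root without a failure indication. By the single-bit scheme of \cref{reduce-failure-information}, every vertex of $T_i$ received messages from all of its children in the tree phase. I argue by induction over the height of $T_i$ that every process $q \in T_i$ is alive through its tree phase. Hence the message reaching the root from $c_i$ equals the reduction over $q\in T_i$ of the up-correction result $v_q$, and no vertex contributes twice because the tree phase merges disjoint subtrees. Now take a non-failed process $p$ in a full group $G$. By the lemma, $G$ has a unique member $q \in T_i$, and $q$ is alive through its tree phase. Because $p$ does not fail, $p$ sends its own input value to $q$ in the exchange of \cref{alg:upcorrection}. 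Thus $v_q$ contains $x_p$ exactly once: $q$ folds in each received value once, and the sent value is \emph{senddata}, the raw input rather than an accumulated value. Since no other member of $T_i$ belongs to $G$, no other $v_{q'}$ with $q' \in T_i$ contains $x_p$. Therefore $x_p$ appears exactly once in the value from $T_i$.

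I must also check that a non-failed process sends its value to $q$ even if another group member fails. This holds because each exchange in \cref{alg:upcorrection} sends \emph{senddata} before receiving. A failed peer therefore only causes a timeout on the receive and does not block the sends to the other members; I take it that the loop continues after the timeout. The bound of at most $f$ failures is not needed for this exactly-once claim. It only guarantees that at least one subtree is failure-free, because a failure indication in $T_i$ needs a failed vertex in $T_i$ and the $f+1$ subtrees are disjoint. I would remark this to justify that the statement is not vacuous.

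The main obstacle is the numbering lemma, since the paper fixes the tree only through the I(f)-tree definition and the example. I would make the breadth-first round-robin labelling an explicit hypothesis and verify the claim by induction on depth: at each level, the vertices of level $d$ in $T_i$ are the labels $\equiv i-1 \pmod{f+1}$ within the range of that level. Balanced sizes then ensure that each full block of $f+1$ consecutive labels hits every subtree exactly once.
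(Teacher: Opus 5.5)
This is essentially the paper's proof: one member of each full group lies in every subtree by the residue labelling, the value is included through a subtree that reports no failure (you argue this directly, the paper contrapositively), and uniqueness holds because a full group meets each subtree once; your observation that \emph{senddata} is the raw input makes explicit why $v_q$ cannot count $x_p$ twice, which the paper leaves implicit. Fix the labelling hypothesis, though: breadth-first numbering with each vertex's children consecutive does \emph{not} give $p\in T_{(p-1)\xmod(f+1)}$ (for $f=1$, $n=7$ it puts the whole group $\{3,4\}$ under process $1$), so assume instead the interleaved numbering of \cref{fig:failure_reduce_tree_up-correction}, i.e.\ $T_i=\{p\ge 1 : (p-1)\xmod(f+1)=i\}$ (residue $i+1$, not the $i-1$ of your last paragraph), which is exactly what the paper uses implicitly.
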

\begin{proof}
  First it is shown that the value from an arbitrary process is
  included at least once in each subtree. This is shown by
  constructing a process that has the value. Afterwards it is argued
  that there can not be multiple processes in one subtree that hold
  the value.

  Let $\ell$ be a non-failed process, $k\in\N_0$, $k > 0$, $k \le f+1$. It
  needs to be shown that the data from process $\ell$ is available in
  the $k$-th subtree, which is the subtree spanned by process numbered
  $k$.

  If $\ell = 0$, it is the root, so it is considered ``grouped with
  the root,'' and there is nothing to show.

  Let $\ell$ not be grouped with the root, i.e.,
  $\left\lfloor\frac{\ell - 1}{f+1}\right\rfloor < \left\lfloor\frac{n
  - 1}{f+1}\right\rfloor$ and $\ell\ne0$. In up-correction $\ell$
  exchanges data with the process numbered
  $a \defeq \left\lfloor\frac{\ell - 1}{f+1}\right\rfloor (f+1) + k$
  (or $a$ is the number of $\ell$).

  \begin{minipage}{0.4\linewidth}
    \begin{tikzpicture}
      \node[draw,circle,minimum size=6mm,inner sep=0mm] (root) {$0$};
      \node[draw,circle,below left=5mm and 5mm of root,minimum size=6mm,inner sep=0mm] (nodek) {$k$};
      \node[draw,circle,below right=5mm and 5mm of root,minimum size=6mm,inner sep=0mm] (node_parent) {$ $};
      \draw (root) -- (nodek);
      \draw (root) -- (node_parent);
      \node[below=5mm of nodek] (treek) {$\vdots$};
      \node[below=5mm of node_parent] (treeell) {$\vdots$};
      \node[below=5mm of treek,draw,circle,minimum size=6mm,inner sep=0mm] (nodea) {$a$};
      \node[below=5mm of treeell,draw,circle,minimum size=6mm,inner sep=0mm] (nodeell) {$\ell$};
      \draw (nodek) -- (treek);
      \draw (treek) -- (nodea);
      \draw (node_parent) -- (treeell);
      \draw (treeell) -- (nodeell);
    \end{tikzpicture}
\end{minipage}%
\begin{minipage}{0.57\linewidth}
  Then $a < n$, and the process
  numbered $a$ is in the $k$-th subtree, because $(a - 1) \xmod (f+1)
  = k - 1$. If the process numbered $a$ does not fail at least long
  enough to send a message to its parent in the tree phase, its
  value is included in this message. Barring any failures along the
  way, the value is going to be included in the final result of the
  subtree. Thus the proof is complete in this case.  \end{minipage}

  If the process numbered $a$, or any of its parents, experience a
  failure, and do not get to send a message in the tree phase,
  their respective parent detects the failure and propagates that
  information upward. Accordingly, the root of the subtree (i.e., the
  process numbered $k$) will not report no failures in its subtree.

  Note that additional {failures} can stop the propagation of
  this failure information. These additional failures will be
  detected, however, and will lead to process number $k$ not claiming
  ``no failures.''

  This shows that there is a process in each given subtree that holds
  the value from process $\ell$. It remains to be shown that there is
  at most one.

  To see that at most one {process} in a different subtree
  receives the value from a given process, note that Communication for
  any process except for the root only occurs with at most one process
  from each other subtree. That holds because each process belongs to
  at most one {up-correction group}, and all communication except for
  the messages between the root of the tree and its children in
  the tree phase is within one subtree.
\end{proof}

\begin{theorem}
  \label{thm:reduce} Let there be no more then $f$ processes that
  experience a {failure}, \inoperational\
  or \preoperational. Let a {I(f)-tree} be used,
  with {up-correction groups} of size $f+1$.

  In the tree phase, all children of the root will either yield
  a correct value, or indicate a {failure} in their subtree, or
  have failed themselves.
\end{theorem}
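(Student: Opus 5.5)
I will prove the statement by a case distinction on each child $c_k$ of the root, $1\le k\le f+1$, using \cref{thm:reduce-replication} as the main ingredient. Here a ``correct value'' is one that, possibly after the root combines it with its own value and the values of its up-correction group (which it knows from the sender id, as in \cref{alg:reduce-root}), satisfies the semantics of \cref{reduce-semantics}. Those semantics are: all non-failed inputs are included exactly once, and each failed process is either fully included or fully absent. There are three exhaustive cases. If $c_k$ fails before sending its tree-phase message to the root, it falls into the third alternative. If $c_k$ sends a message whose failure information indicates a failure in $T_k$, it falls into the second alternative. The only remaining case is that $c_k$ sends a message reporting no failure in $T_k$, and I must show that its value is correct.

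For this remaining case I apply \cref{thm:reduce-replication} to $T_k$. It gives that every non-failed process not grouped with the root has its value included exactly once in the value sent by $c_k$. Two further points then need to be checked. First, the values of processes grouped with the root: if the root is in the last group, it obtains these values itself during up-correction. The value from $c_k$ may or may not already contain them, depending on whether the last group has a member in $T_k$. Since group membership is a deterministic function of process numbers, the root can determine this from $k$ and combine accordingly, as the description in \cref{reduce-description} states. I will argue this by checking whether $(n-1)\bmod (f+1) \ge k$, that is, whether the incomplete last group contains the process with number congruent to $k$.

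Second, failed processes must contribute all or nothing, and nothing may be counted twice. ``No failure'' in $T_k$ means every process in $T_k$ sent its tree message, so every process of $T_k$ completed up-correction. A process $\ell$ outside $T_k$ that failed can appear in $T_k$'s value only through its unique group partner $a\in T_k$ given in the proof of \cref{thm:reduce-replication}, and then only as its whole input value or not at all. Hence it contributes all or nothing, and exactly once by the uniqueness part of that theorem. Failed processes inside $T_k$ cannot exist, since they would have been reported.

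I expect the main obstacle to be making ``correct value'' precise for in-operational failures. Different children may legitimately return different correct values, for example one including a process that failed after sending in up-correction and one excluding it. I would handle this by proving correctness per child, as a value consistent with some assignment of failed processes to ``alive'' or ``dead'', rather than requiring global agreement. I would also treat the bound of at most $f$ failures as needed only for \cref{thm:reduce-replication}; the count itself is not used beyond that.
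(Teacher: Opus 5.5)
Your argument is correct, but it splits the cases differently from the paper. The paper splits on whether $T_k$ actually contains a failure. A failure-free subtree propagates all data by \cref{thm:reduce-replication}. The substance of the paper's proof is the completeness of the failure indicator: if some process in $T_k$ does not send its tree message and $c_k$ is alive, the nearest non-failed ancestor detects the missing message and the indication reaches $c_k$, even when intermediate parents have failed. You instead split on what $c_k$ reports. This matches the logical form of the statement and makes two of your three cases immediate. Your effort goes into showing that a ``no failure'' report carries a correct value. You handle the root's group explicitly via $(n-1) \bmod (f+1) \ge k$, and you give an at-most-once, all-or-nothing argument for failed processes, which rests on each process sending only its own input during up-correction. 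The paper's proof of this theorem does not discuss how values of failed processes enter the selected subtree's value; it defers this to item 4 of \cref{thm:reduce-semantics}. Your version is therefore more careful on that side. The price is that you treat as definitional the step that a ``no failure'' report from $c_k$ means every process in $T_k$ sent its tree message, and that step is exactly what the paper proves. You should supply the short induction down $T_k$: $c_k$ sends a clear indicator only if it received messages with clear indicators from all its children, and so on. Also, ``failed processes inside $T_k$ cannot exist'' is too strong. A process may fail in-operationally after sending its tree message, but its value is then included exactly once through that message, so your conclusion is unaffected.
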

\begin{proof}
  In the tree phase, each non-failed {process} collects the messages
  from all its children and reduces the value received before sending
  a message itself. Thus, by \cref{thm:reduce-replication}, subtrees
  without failures propagate all data. The value of (all) processes
  grouped with the root is included iff one such process is in the
  subtree in question.

  Now assume a {failure} does occur in the subtree in
  question. For ease of wording, assume first that the root of the
  subtree, i.e., the child of the root, has not failed.  A failure in
  a subtree is guaranteed to lead to a failure being reported in the
  failure information sent together with the value by the root of the
  subtree, by the following logic. In the tree phase any
  process {failure} will eventually be detected by a non-failed parent,
  and be propagated by further non-failed parents. A failed parent will not
  propagate the failure information, but the next non-failed {process} will
  detect a {failure}.

  In this way, if the root of a subtree has not failed, it will report
  correctly on the presence of failures in its subtree.

  In case the respective child of the root failed, the theorem is
  trivially satisfied. The root will detect the failure, and classify
  the corresponding subtree as containing a failure.
\end{proof}

When the root receives values from its children, it must select one of
them, and potentially combine it with its own value. By
\cref{thm:reduce}, it can select the value from any subtree
without a failure. It knows which these are because all schemes
for failure information propagation described before contain that
information.

\begin{theorem}
  \label{thm:reduce-number-failures} Let there be no more then
  $f$ {processes} that experience a {failure}, \inoperational\
  or \preoperational. Let a {I(f)-tree} be used,
  with {up-correction groups} of size $f+1$.

  After \reduce\ either the root will have
  failed, or it will know the correct result, as described
  in \cref{reduce-semantics}.
\end{theorem}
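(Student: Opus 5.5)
The plan is a case split on whether the root fails, followed by a pigeonhole argument over the $f+1$ subtrees of the root, and then an appeal to \cref{thm:reduce-replication,thm:reduce} for correctness of the selected value.

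If the root fails, \preoperationally\ or \inoperationally, the first alternative of the statement holds and there is nothing to show. So from now on the root is assumed not to fail.

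\textbf{Step 1: a failure-free subtree exists.} The root of an I(f)-tree has exactly $f+1$ children, spanning the disjoint subtrees $T_0,\ldots,T_f$. At most $f$ processes fail in total, and each failed process lies in at most one subtree. By pigeonhole, some subtree $T_i$ contains no failed process at all. In particular its root $c_i$ is alive, and every process in $T_i$ completes the tree phase. By the failure-detection argument in the proof of \cref{thm:reduce}, failure information is set only when some child does not deliver its message, so $c_i$ sends a value whose failure indication says ``no failure''. Hence the loop in \cref{alg:reduce-root} does not reach the error branch. It returns on the first child message that reports no failure, which need not come from $T_i$.

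\textbf{Step 2: the selected value is correct.} Let $c_j$ be the child whose message the root accepts. Because $c_j$'s message reports no failure, \cref{thm:reduce} says its value is correct in the following sense. By \cref{thm:reduce-replication}, it contains the value of every non-failed process not grouped with the root exactly once. It also contains the values of the processes grouped with the root if and only if some member of that group lies in $T_j$.

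The root completes this value in one of two ways:
\begin{itemize}
\item If the group containing the root has a member in $T_j$, the root returns the received value as is. This value already contains the root's own input, since that input was exchanged during up-correction.
\item Otherwise, the root combines the received value with its own up-corrected value $v$, which holds the values of the whole last group.
\end{itemize}
The root can decide which case applies from the sender id $c_j$ alone. Since $(a-1) \xmod (f+1)$ determines which subtree a process $a$ belongs to, the root knows which subtrees contain members of the last group. Associativity and commutativity of the reduction function then make the combined value equal to the reduction over all non-failed processes.

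\textbf{Step 3: check the semantics of \cref{reduce-semantics}.}
\begin{itemize}
\item Item 3 is Step 2.
\item Item 2 holds because the root returns only once.
\item Item 4 holds because of how the up-correction group values are formed. An \inoperationally\ failed process $p$ contributes its raw value only through up-correction exchanges and its own tree message. Any aggregate containing $p$'s value therefore contains it exactly once (again by \cref{thm:reduce-replication}), never partially.
\item Items 1 and 5 hold because the failure-free subtree $T_i$ eventually delivers. Timeouts on failed senders terminate every receive.
\end{itemize}

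The main obstacle is Step 2 under \inoperational\ failures. The accepted subtree $T_j$ may differ from $T_i$, and a process that failed during the operation might have its value included in $T_j$'s aggregate. The statement of \cref{thm:reduce-replication} only guarantees inclusion for non-failed processes. To handle this, I would argue that $p$'s value enters $T_j$ through at most one process, because $p$ communicates with at most one member of each subtree. So its value appears at most once, and either way the outcome is one of the two allowed results.
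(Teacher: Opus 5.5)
Your proposal is correct and follows essentially the paper's argument: a pigeonhole count over the $f+1$ root subtrees gives one that reports no failure, and \cref{thm:reduce} (with \cref{thm:reduce-replication}) makes correct the value of whichever non-failure-reporting child the root accepts, once the root combines it with its own $v$ as decided by the sender id. There are two minor slips: when $f+1$ divides $n-1$ the root is in no group, so $v$ is only its own input and the sender-id test must refer to the root's group (as your first bullet says), not to ``the last group,'' which then has a member in every subtree; and in Step~3, item~1 follows from item~3, not from eventual delivery by $T_i$, though Step~3 properly belongs to \cref{thm:reduce-semantics} anyway.
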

\begin{proof}
  If no more than $f$ processes fail, at least one of the $f+1$
  children of root is guaranteed to have no failure in the subtree
  spanned by it.

  If the root has failed, the theorem is satisfied. Otherwise,
  by \cref{thm:reduce}, one child of the root will yield the correct
  value. The root might have to combine its value with its own value
  $v$, depending on which child reported it. The root can detect a
  valid child to choose the value from because it does not report a
  failure in its subtree.
\end{proof}

\begin{theorem}
  \label{thm:reduce-semantics} Let there be no more then
  $f$ {processes} that experience a {failure}, \inoperational\
  or \preoperational. Let a {I(f)-tree} be used,
  with {up-correction groups} of size $f+1$.

  \Reduce, as described
  in \cref{reduce-description}, satisfies the semantics
  from \cref{reduce-semantics}:
  \begin{enumerate}

  \item If the root calls {\emph{deliver\_reduce($m$)}}, all
        non-failed processes have called {\emph{init\_reduce($m$)}}.

  \item Each process calls {\emph{deliver\_reduce($m$)}} at most once
        for a given $m$.

  \item The value returned by \reduce\ at a non-failed root includes the
        input values of all non-failed processes.

  \item Values of {failed} {processes} are either
        included normally, or disregarded. No intermediate stages are
        possible.

  \item Each non-failed process executes \emph{deliver\_reduce($m$)} eventually,
        if all non-failed processes execute \emph{init\_reduce($m$)}.

  \end{enumerate}
\end{theorem}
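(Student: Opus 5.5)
We check the five properties one at a time. Properties 3 and 4 rest on \cref{thm:reduce-replication,thm:reduce,thm:reduce-number-failures}; properties 1, 2 and 5 follow from the control flow of \cref{alg:reduce-root,alg:reduce-non-root} together with the fail-stop and reliable-network assumptions. Throughout, at most $f$ processes fail. By convention, a root that fails (pre- or in-operationally) makes the statements about the root vacuous, so we assume the root does not fail.

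\textbf{Properties 1 and 2.} For property 1, the root calls \emph{deliver\_reduce($m$)} only after it has received a message from some child $c$ that reports no failure in its subtree. By \cref{thm:reduce-replication}, that message includes the values of all non-failed processes, except those grouped with the root. A value is produced only after its owner has called \emph{init\_reduce($m$)} and started up-correction. Processes grouped with the root exchanged values with the root directly in up-correction. Hence every non-failed process has called \emph{init\_reduce($m$)}. For property 2, each algorithm is straight-line code: \cref{alg:reduce-non-root} sends to its parent once, and \cref{alg:reduce-root} returns on the first acceptable message. So delivery happens at most once per $m$; this is immediate from the pseudocode.

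\textbf{Properties 3 and 4.} Property 3 is essentially \cref{thm:reduce-number-failures}. Some child subtree is failure-free, and by \cref{thm:reduce} its value is correct. By \cref{thm:reduce-replication} it contains each non-failed value exactly once, except for values from the root's group. The root adds its own group's value exactly when the chosen subtree contains no member of that group, which it can tell from the sender id. The result is that every non-failed input is included exactly once. For property 4, a failed process $p$ contributes its input only through the messages it actually sent in up-correction, and every such message carries the whole input. Fail-stop rules out corrupted or partial values. So $p$'s input appears in the selected value either in full or not at all. The remaining worry is double counting when $p$'s value reaches the chosen subtree along two paths. This is excluded by the ``at most once'' part of \cref{thm:reduce-replication}: each process communicates with at most one process from each other subtree.

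\textbf{Property 5 (main obstacle).} Liveness needs more care. We argue by induction on height in the tree that every non-failed process finishes its phases. Up-correction terminates because every receive from a failed peer ends in a timeout confirmed by the failure monitor, which the paper assumes to be eventually accurate. Then, bottom-up, each non-failed non-root process eventually receives from or times out on every child, and sends to its parent; at that point it calls \emph{deliver\_reduce($m$)}. For the root, a subtree without failures exists by the pigeonhole argument in \cref{thm:reduce-number-failures} ($f+1$ children, at most $f$ failures). That subtree's child eventually sends a message with no failure indicated, so the root returns and never reaches the error branch. The delicate part is relying on timeouts and failure detection being eventually complete. I would state this explicitly as an assumption inherited from the failure model rather than prove it.
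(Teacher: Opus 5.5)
Your proposal is correct and follows essentially the same route as the paper: property 1 via the completeness of a subtree reporting no failure (the paper simply says it follows from property 3), property 2 from the control flow, property 3 from \cref{thm:reduce-number-failures}, property 4 by tracing the single path along which a failed process's value can enter the selected result, and property 5 by bounded waiting backed by the failure monitor. Your explicit pigeonhole step for the root in property 5 is a useful addition. There is one slip to fix in property 4: a failed process $p$ lying in the selected subtree $S$ contributes through its tree-phase message to its parent, not through an up-correction message, and the paper's case distinction handles exactly this. For $p$ grouped with the root, you should also state there that the root adds its own data only when $S$ contains no member of its group, which rules out double counting.
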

\begin{proof}
  \begin{enumerate}

  \item This property follows from property 3. Note
    that {failed} {process} never calls
    {\emph{deliver\_reduce}}.

  \item \Reduce\ is delivered once after the tree phase. It can
    not be skipped. Thus multiple receives are not possible.

    A leaf process, that does not receive in the tree phase,
    could theoretically execute it multiple times. It does not do so
    without external trigger, however, and a new call to \reduce\
    leads to a new unique id in the reduce message.

  \item This follows from \cref{thm:reduce-number-failures}.

  \item Let $p$ be a {failed} {process}. If $p$
    failed \preoperationally, it does not send any messages in the
    operation. Accordingly, the value of $p$ is not known to any non-failed
    processes, and is not included.

    Assume $p$ fails \inoperationally. First assume that $p$ is not
    grouped with the root.  Let $S$ be the subtree spanned by a child
    of the root that the root eventually selects the value from
    (cfg. \cref{thm:reduce}). Then the result depends on the state of
    $p$ at the time of the sending operation to $S$. The relevant
    sending operations are either the sending to the member of
    the up-correction group in $S$, if $p$ is not in $S$, or the
    message sent by $p$ to its parent, if $p$ is in $S$.

    If $p$ does not fail before sending the respective message, its
    value is propagated normally, and is included in the result, like
    that of a non-failed process. If $p$ fails before sending that
    message, it is perceived like if it
    had {failed} \preoperationally. No value from $p$
    is included in this case.

    If $p$ fails \inoperationally\ and is grouped with the root (i.e., in the
    same up-correction group as the root), the reasoning is a little
    different. If $p$ is the root, its value is eventually
    disregarded, as nobody is going to receive the final message. If
    $p$ is not the root, either its message sent to the root in the up-correction\ phase plays the role
    of the message sent to a member of $S$, if $S$ has no common
    member with the up-correction\ group of $p$, or, if $S$ contains a
    member of the up-correction\ group of $p$, called $v$, the message
    from $p$ to $v$ plays that role.

  \item All that needs to be seen is that for some process $p$ the
    work between successive calls to \emph{reduce} and \emph{deliver}
    is bounded.

    First, in the up-correction\ phase, $p$ sends to and receives from
    all other {processes} in its {up-correction group}. These
    are up to $f$ processes. The receiving is retried if the failure
    monitor of the prospected sender confirms that it has not
    failed. Given that the {process} executes \emph{reduce}, it
    will eventually either fail, in which case $p$ stops waiting for
    the message, or send a message, in which case $p$ will receive.

    The same holds for the subsequent tree\ phase. All
    non-failed children will eventually send, and $p$ will stop
    waiting for receives. After that, unless $p$ is the root, it will
    send a single message to its parent.

  \end{enumerate}
\end{proof}

Note that in the presence of \inoperationfailures, item 4
of \cref{thm:reduce-semantics} allows for multiple different correct
results. The root {process} gets to select the final result.

\begin{theorem}
  \label{thm:reduce-number-messages}
  Let \reduce\ be executed without any {failures}. Then the number of
  messages sent between {computation processes} is the following.

  \begin{itemize}

    \item In the up-correction\ phase
      $f(f+1)\left\lfloor\frac{n-1}{f+1}\right\rfloor + a(a-1)$ with $a
      = ((n-1) \xmod (f+1)) + 1$ are being sent.

    \item In the tree\ phase $n-1$ messages are being sent.

  \end{itemize}

  This does not include any messages that might be necessary for the
  detection of {failures}. These can even be required when no failures
  occur.

  When processes fail, less messages are being sent.
\end{theorem}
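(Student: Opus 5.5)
The plan is a direct count of the messages in each phase, assuming no failures so that every process executes \cref{alg:upcorrection} and the tree phase completely.

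\textbf{Up-correction phase.} First I describe the groups. Processes $1,\ldots,n-1$ are split by group number $\left\lfloor\frac{p-1}{f+1}\right\rfloor$. Writing $n-1 = q(f+1) + s$ with $q=\left\lfloor\frac{n-1}{f+1}\right\rfloor$ and $0\le s< f+1$, there are $q$ full groups of $f+1$ members. If $s>0$ there is one further group of $s$ non-root members, and the root joins it, so it has $s+1$ members. If $s=0$ the root belongs to no group. In both cases the set of processes outside the full groups is a group of size $a = s+1 = ((n-1)\xmod(f+1))+1$. When $s=0$ this is the root alone with $a=1$, and it contributes $a(a-1)=0$ messages, consistent with the root not running up-correction. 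Next, by \cref{alg:upcorrection}, in a group of size $g$ every member sends exactly one message to each of the other $g-1$ members. With no failures every send happens, so the group produces $g(g-1)$ messages. Summing gives $q\cdot(f+1)f + a(a-1)$, which is the claimed count.

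\textbf{Tree phase.} By \cref{alg:reduce-non-root}, every non-root process sends exactly one message, to its parent, after its receives. The root sends nothing (\cref{alg:reduce-root}). Since the I(f)-tree spans all $n$ processes, there are $n-1$ non-root vertices, so the phase sends $n-1$ messages. Failure-detection traffic is excluded by the statement, so nothing further needs to be counted.

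\textbf{With failures.} Every send in either phase is unconditional and executed at most once per process and partner. A failed process only omits sends; it never triggers additional ones. In particular, failure information is piggybacked on existing messages and the root does not retry. Hence the counts above are upper bounds, and failures can only lower them, strictly when a process fails before completing its sends.

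The only step needing care is the boundary case of the last group. I have to check that the formula for $a$ correctly covers both the root joining a partial group and the root belonging to no group, where $a=1$. I also have to check that the root is not counted twice, since it is a member of the last group but not of any full group.
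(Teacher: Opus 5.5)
Your proposal is correct and takes essentially the same route as the paper's proof. You count $\left\lfloor\frac{n-1}{f+1}\right\rfloor$ full groups contributing $f(f+1)$ messages each, plus a last group of size $a$ contributing $a(a-1)$ (zero in the degenerate case $a=1$), one message per non-root process in the tree phase, and note that failures only remove sends. Your split $n-1=q(f+1)+s$ and your remark that the decrease is strict only when a process fails before completing its sends are slightly more careful than the paper's wording, but they do not change the argument.
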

\begin{proof}

  First consider the failure-free case.

  \begin{itemize}

  \item There are
    $\left\lfloor \frac{n-1}{f+1} \right\rfloor$ {up-correction groups}
    with $f+1$ {processes} each. In each such group each of the $f+1$
    processes sends $f$ messages. That accounts for the first term.

    If $a > 1$, it is the size of the last up-correction\
    group. $a(a-1)$ is the number of messages sent in this group,
    because each of the processes sends $a-1$ messages. If $a=1$,
    there is no additional {up-correction group}, but $a(a-1)=0$.

  \item In the tree\ phase each process except for the root
    sends one message to its parent in the tree.

  \end{itemize}

  If there are {failures}, the failing processes send less
  messages. Perhaps none, if the process failed \preoperational. No
  other {process} sends more messages based on failures.

\end{proof}

\section{Allreduce}
\label{allreduce}

Without failures, \allreduce\ works like \reduce, but the result is
provided at all processes. There is no root process in \allreduce.

\Allreduce\ is one of the most common operations in HPC
programs~\cite{grbovic2007performance}, and is frequently
studied~\cite{kolmakov2020generalization,%
patarasuk2007bandwidthefficient,zhae2014kylix}.

THe algorithm for allreduce is consecutive execution of reduce\ to an
arbitrary root and broadcast\ from this root. Some consideration
has to be given the case of a failed root. The short version is that
the fault tolerance of reduce\ and broadcast\ yield the required
properties for \allreduce. The long version is detailed below.

The following notation is used. Analogous to \reduce,
an \allreduce\ operation is identified by an allreduce message
$m$. The content of $m$ is defined below. Each process
executes \emph{init\_allreduce($m$)} before sending any network message
related to the \allreduce\ operation, and \emph{deliver\_allreduce($m$)} to
signal completion of the operation to the caller.

The allreduce message does not store a root process, as there is no
dedicated root in allreduce. Instead it has the following members:

\begin{itemize}

\item The set of processes that participate in the operation.

\item A unique id.

\item The value. This is only relevant for the call to \emph{deliver},
  as no final value is available at the time \emph{allreduce} is called.

\end{itemize}

\subsection{Semantics}

\label{allreduce-semantics}

Let $n$ {processes} execute \allreduce. Let up to $f$ processes
fail, \preoperationally\ or \inoperationally. A set of at least
$f+1$ {processes} must be known to only fail \preoperationally,
not \inoperationally.

Then, if all processes execute \allreduce, the following semantics
hold.

\begin{enumerate}

\item If any process calls {\emph{deliver\_allreduce($m$)}}, all non-failed
  processes have called {\emph{init\_allreduce($m$)}}.

\item Each process calls {\emph{deliver\_reduce($m$)}} at most once
  for a given $m$.

\item Each non-failed process calls {\emph{deliver\_allreduce{$m$}}}
  eventually, if all non-failed processes called \emph{init\_allreduce($m$)}.

\item Each \allreduce\ includes the values of all non-failed processes.

\item The value of a failed process is either included at every
  non-failed process, or at none.

\end{enumerate}

This is proven in \cref{allreduce-proofs}.

\subsection{Description}
\label{allreduce-description}

\begin{algorithm}
  \DontPrintSemicolon
  \KwIn{The data contributed by this process}
  \KwOut{The result, including contributions of all non-failed processes}
  \SetKwData{Data}{data}
  \SetKwData{AllredData}{allreduce\_data}
  \SetKwData{RetVal}{result}
  \SetKwFunction{AllreduceFct}{allreduce}
  \SetKwProg{Fn}{function}{begin}{end}
  \Fn{\AllreduceFct{\Data}}{
    r $\leftarrow 0$\;
    ok $\leftarrow$ false\;
    \While{not ok}{
      \AllredData $\leftarrow$ reduce(\Data, root=r)\;
      \RetVal $\leftarrow$ broadcast(\AllredData, root=r)\;
      ok $\leftarrow$ broadcast finished successfully\;
      r $\leftarrow$ successor(r)\;
    }
    \Return{\RetVal}\;
  }
  \caption{\Allreduce}
  \label{alg_allreduce}
\end{algorithm}

As described above, this algorithm consists of a fault-tolerant reduce
to an arbitrary root, followed by a fault-tolerant broadcast of the
resulting value from this root. The algorithm is shown in
algorithm \ref{alg_allreduce}. Broadcast and \reduce\ need to satisfy the
semantics described in the paper on fault tolerant
broadcast~\cite{kuttler-corrected-trees} and \cref{reduce-semantics}.

Since there is no root process in \allreduce, the root is chosen
arbitrarily. The scheme requires the root to be chosen consistently
across all participating processes, and from the set of processes that
are known not to fail \inoperationally. The reason for this
restriction is the corresponding requirement of broadcast.

If the root does not fail \preoperationally, the operation
completes. Otherwise, the failure is consistently detected, and
a new root is chosen.

To guarantee progress, at least $f+1$ processes must eventually be tried as
root. Therefor a deterministic selection that selects enough processes
eventually is needed.

\subsection{Properties}
\label{allreduce-proofs}

\begin{theorem}
  \label{thm:allreduce-semantics}

  Given \reduce\ and broadcast that satisfy the semantics
  in \cref{reduce-semantics} and the paper on fault tolerant broadcast,
  respectively, \allreduce, as described
  in \cref{allreduce-description}, satisfies the semantics
  from \cref{allreduce-semantics}:

  \begin{enumerate}

  \item If any process calls {\emph{deliver\_allreduce($m$)}}, all non-failed
        processes have called {\emph{init\_allreduce($m$)}}.

  \item Each process calls {\emph{deliver\_reduce($m$)}} at most once
        for a given $m$.

  \item Each non-failed process calls {\emph{deliver\_allreduce{$m$}}}
        eventually, if all non-failed processes called
        \emph{init\_allreduce($m$)}.

  \item Each \allreduce\ includes the values of all non-failed processes.

  \item The value of a failed process is either included at every
        non-failed process, or at none.

\end{enumerate}
\end{theorem}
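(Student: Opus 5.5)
The proof goes property by property. We treat \allreduce\ as a loop of rounds; in round $j$, \reduce\ runs to root $r_j$ and is followed by a broadcast from $r_j$. We use only the semantics of \reduce\ from \cref{reduce-semantics}, via \cref{thm:reduce-semantics}, and the broadcast semantics from the corrected-trees paper. The first step is a round lemma. If $r_j$ is non-failed, then the reduce delivers at $r_j$ a value containing the inputs of all non-failed processes; this is properties 3 and 5 of \cref{thm:reduce-semantics}. The broadcast from a non-failed root then delivers that single value at every non-failed process, and all of them observe \enquote{broadcast finished successfully}. If $r_j$ failed, the reduce is a no-op, and the broadcast semantics guarantee that the failure is detected consistently, so every non-failed process advances to $r_{j+1}$ = successor$(r_j)$. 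Roots are chosen deterministically and consistently, so all non-failed processes execute the same sequence of rounds.

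Termination (property 3) follows because at most $f$ processes fail. The successor function tries at least $f+1$ roots drawn from the set known not to fail \inoperationally, so some round has a non-failed root, and by the round lemma that round ends the loop everywhere. Each round finishes in bounded time: property 5 of \reduce\ and the liveness of broadcast give this, and the root failure is detected in that case as well. Property 2 is immediate, since deliver is called only once, after the loop exits, and the loop exits at most once. Property 1 follows from property 1 of \reduce: the root $r_j$ of the successful round delivered its reduce, so all non-failed processes have called init\_reduce, and hence init\_allreduce, for $m$ beforehand. The value each process delivers is the one broadcast in that round, so property 4 follows from property 3 of \reduce\ applied to $r_j$.

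Property 5 is the main obstacle, because a failed process's value could in principle appear at some processes and not at others. The plan is to argue that within the successful round there is exactly one value, namely the one $r_j$ selected at the end of its reduce. Property 4 of \reduce\ says each failed process is either fully included in or fully absent from that value. Broadcast agreement then ensures every non-failed process delivers this same value, never a mixture and never the value of an earlier round. Here the restriction of roots to processes that do not fail \inoperationally is essential: it rules out a root that fails mid-broadcast, leaving only some processes with the value while the rest retry with a new root and compute a different set of inclusions. Failed roots can only fail \preoperationally, so such a root sends nothing and no process delivers a value from an aborted round. We would check carefully that this consistency is exactly what the broadcast semantics of~\cite{kuttler-corrected-trees} provides.
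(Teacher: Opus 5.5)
Your proposal is correct and follows the paper's proof: you reduce each property to the matching semantics of reduce and broadcast, get termination from the $f+1$ root candidates of which at most $f$ can fail, and get property 5 from reduce's no-intermediate-state guarantee combined with the single root of the successful round. You are more explicit than the paper in two places. You argue property 2 from the loop structure, and in property 5 you spell out broadcast agreement and why roots must be restricted to processes that fail only pre-operationally; the paper leaves both points implicit.
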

\begin{proof}
  Without loss of generality, assume that there is a process that does
  not fail until the end of the operation. Otherwise the semantics are
  trivially satisfied.

  \begin{enumerate}

  \item Calling \emph{deliver\_allreduce} implies having
    called \emph{deliver\_reduce} for the root process,
    and \emph{deliver\_broadcast} for all other processes. Thus,
    property 1 of both operations semantics yield this property.

  \item Like property 1, this is a direct consequence of the
    corresponding properties of \reduce\ and broadcast.

  \item First \reduce\ is executed which is delivered eventually,
    by property 5 of the semantics of \reduce.

    If the root has not failed, it sends a normal broadcast, and property 3
    and 5 of the semantics of broadcast gives this property.

    If the root has failed, it is detected, and a new root is
    tried. Because $f+1$ candidates are available at least one process
    does not fail, and a non-failed root is tried eventually.

  \item The value of the \allreduce\ is the value that is sent in the
    broadcast, which is received by the root in the previous
    \reduce. Property 3 of the semantics of \reduce\ thus provides
    this property.

  \item Whether the value of a {failed} {process}
    are included in the result of the \reduce\ to the root process is
    unspecified. But no intermediate state is possible, by property 4
    of the semantics of \reduce. Since a single root is used to
    determine the value of the \allreduce, its value provides a
    consistent result for all other processes.

  \end{enumerate}
\end{proof}

\begin{theorem}
  \label{thm:allreduce-number-messages}

  If the root did and does not fail, \allreduce\ as described
  in \cref{allreduce-description} requires as many messages to be
  sent between computation processes, as \reduce\ together
  with broadcast does.

  $f$ failures can increase this number at most to the $(f+1)$-fold.
\end{theorem}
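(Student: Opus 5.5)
The plan is to count messages round by round in the loop of algorithm~\ref{alg_allreduce}. Each iteration of the while loop runs one \reduce\ to root $r$ followed by one broadcast from $r$, and sends no other messages between computation processes.

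For the first claim, suppose the initial root $r=0$ does not fail. Then the reduce completes at $r$ by \cref{thm:reduce-number-failures} and property~5 of \cref{thm:reduce-semantics}. The broadcast from a non-failed root finishes successfully by the broadcast semantics of~\cite{kuttler-corrected-trees}, so ok is set and the loop ends after exactly one iteration. The message count is therefore the count of one \reduce, given by \cref{thm:reduce-number-messages}, plus the count of one broadcast, as stated in~\cite{kuttler-corrected-trees}.

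For the second claim, I would bound two things: the number of iterations, and the cost of each iteration.
\begin{itemize}
\item \emph{Number of iterations.} An iteration is repeated only if its root has failed. Once a failure is detected the root is never reused, since successor moves deterministically through distinct processes. At most $f$ processes fail, so at most $f$ iterations can have a failed root. The $(f+1)$-st distinct candidate must be non-failed, and by the argument in property~3 of \cref{thm:allreduce-semantics} its iteration terminates the loop. This gives at most $f+1$ iterations.
\item \emph{Cost per iteration.} By the last sentence of \cref{thm:reduce-number-messages}, failures only decrease the number of messages sent in \reduce. The analogous monotonicity statement for broadcast, that failing processes send fewer messages and no process sends extra ones, would be taken from~\cite{kuttler-corrected-trees}. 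In an iteration with a failed root, the root sends at most its failure-free share of broadcast messages, possibly none, and all others send no more than in the failure-free case. So each iteration costs at most one failure-free \reduce\ plus one failure-free broadcast.
\end{itemize}
Multiplying the two bounds gives at most $(f+1)$ times the failure-free count.

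The main obstacle is the per-iteration bound for a failed root. I must make sure that detecting the failed root, and deciding that the broadcast \enquote{did not finish successfully}, is done through the failure monitor rather than by extra computation-process messages. The theorem, like \cref{thm:reduce-number-messages}, excludes failure-detection traffic from the count, so I would lean on that exclusion. I would also need that the broadcast semantics forbid correction messages from inflating the count beyond the failure-free amount. If the broadcast paper only gives this for a live root, I would argue directly that with a dead root no data flows at all, so the count is trivially bounded.
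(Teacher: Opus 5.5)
Your proposal is correct and follows the same route as the paper: one reduce-plus-broadcast round when the root survives, and otherwise at most $f+1$ rounds, since each failed root uses up one of the at most $f$ failures and no round sends more than one reduce and one broadcast. The paper compresses your per-round bound into the remark that a failed root causes \enquote{no additional messages}; your appeal to the last sentence of \cref{thm:reduce-number-messages}, to the analogous monotonicity of broadcast, and to the exclusion of failure-detection traffic spells out assumptions the paper also relies on implicitly.
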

\begin{proof}
If the root has not failed, \reduce\ and broadcast are executed in
succession. No additional messages are sent.

If the root failed, no additional messages are sent, and the operation
is retried with a new root. $f$ failures can lead to at most $f+1$
roots tried, and thus $f+1$ times the amount of messages.
\end{proof}

\printbibliography

\end{document}